\newcommand{\FDR}{\textup{\mbox{FDR}}}
\newcommand{\biggbar}{\hspace{1pt} \bigg| \hspace{1pt}}
\newtheorem{thm}{Theorem}
\newtheorem{lem}{Lemma}
\begin{document}

\title{Dynamic adaptive procedures that control the false discovery rate}

\author{Peter MacDonald and Kun Liang\\
Department of Statistics and Actuarial Science\\
\vspace{3mm}
University of Waterloo\\
Arnold Janssen\\
Heinrich-Heine University D\"usseldorf\\
Mathematical Institute\\
}

\maketitle

\abstract{ In the multiple testing problem with independent tests, the classical linear step-up procedure controls the false discovery rate (FDR) at level $\pi_0\alpha$, where $\pi_0$ is the proportion of true null hypotheses and $\alpha$ is the target FDR level.   Adaptive procedures can improve power by incorporating estimates of $\pi_0$, which typically rely on a tuning parameter.   Fixed adaptive procedures set their tuning parameters before seeing the data and can be shown to control the FDR in finite samples.   We develop theoretical results for dynamic adaptive procedures whose tuning parameters are determined by the data.   We show that, if the tuning parameter is chosen according to a stopping time rule, the corresponding dynamic adaptive procedure controls the FDR in finite samples.   Examples include the recently proposed right-boundary procedure and the widely used lowest-slope procedure, among others.   Simulation results show that the right-boundary procedure is more powerful than other dynamic adaptive procedures under independence and mild dependence conditions.}

\setlength\parindent{0pt}
\setlength{\parskip}{\baselineskip}

\section{Introduction}

Powerful modern computers have introduced large data sets to diverse fields of research, and testing of hundreds or even thousands of hypotheses simultaneously has become commonplace in statistical applications such as genetics, neuroscience, and astronomy.   Since its inception in \citet{BH95}, the false discovery rate (FDR), the expected proportion of false positives, has been widely adopted as an error measure for such large-scale problems.   Much research effort has been made to improve Benjamini and Hochberg's initial method, in particular developing efficient estimators of the FDR that lead to powerful procedures which maintain FDR control. In this paper, we provide the proof of finite sample FDR control for a large class of data-adaptive procedures.   First, we briefly review the literature.

Consider the classical problem of testing $m$ independent simultaneous null hypotheses, of which $m_0$ are true and $m_1 = m - m_0$ are false.   Denote the associated $p$-values by $p_1, p_2, ... ,p_m$ and the ordered $p$-values by $p_{(1)} \leq \cdots \leq p_{(m)}$.   For $t \in [0,1]$, define the following empirical processes \cite{St04}:
\begin{eqnarray*}
V(t) &=& \#\{\text{true null } p_i : p_i \leq t \},\\
S(t) &=& \#\{\text{false null } p_i : p_i \leq t\},\\
R(t) &=& V(t) + S(t).
\end{eqnarray*}
Then the FDR at a $p$-value cut-off $t \in (0,1]$ is defined as
\[
	\FDR(t) =  E \bigg[ \frac{V(t)}{R(t) \vee 1} \bigg].
\]

For a fixed FDR threshold $\alpha$, \citet{BH95} proposed a linear step-up FDR controlling procedure (the BH procedure) which sets the $p$-value cut off at $p_{(k)}$, where $k = \text{max}\{i : p_{(i)} \leq i \alpha / m \}$.   The procedure has been shown to control the FDR conservatively at level $\pi_0\alpha$ under independence, where $\pi_0 = m_0/m$ is the proportion of true nulls \citep{BY01}.   To tighten the FDR control, we could use the \emph{adaptive procedure} that applies the BH procedure at the threshold of $\alpha/\hat{\pi}_0$, where $\hat{\pi}_0$ is preferably a conservative estimate of $\pi_0$.

Instead of finding a rejection region to control the FDR, \citet{St02} proposed to estimate the FDR for a fixed rejection region.   When $R(t) > 0$ and under the usual assumptions that true null $p$-values are independent and uniformly distributed on $(0,1)$, a natural estimator for $\FDR(t)$ arises as
\[
	\widehat{\FDR}(t) = \frac{\hat{E}[V(t)]}{R(t)} = \frac{m \hat{\pi}_0 t}{R(t)}.
\]
The FDR control and FDR estimation approaches are intricately connected.   With $\hat{\pi}_0=1$, the BH procedure can be viewed as finding the largest $p$-value whose FDR estimate is below or equal to $\alpha$.

For a fixed tuning parameter $\lambda \in [0,1)$, \citet{St02} proposed a widely used $\pi_0$-estimator as
\[
\hat{\pi}_0^{}(\lambda) = \frac{m-R(\lambda)}{(1-\lambda)m}. \label{eq:pi0}
\]
Using $\hat{\pi}_0^{}(\lambda)$ in $\widehat{\FDR}$ leads to 
\[
\widehat{\FDR}_{\lambda}(t) = \frac{m \hat{\pi}_0(\lambda) t}{R(t) \vee 1},
\]
and \citet{LN12} showed that $\widehat{\FDR}_{\lambda}(t)$ is a conservative estimator of $\widehat{\FDR}(t)$, i.e.,
\[ E [\widehat{\FDR}_{\lambda}(t)] \geq \FDR(t).\]
To control the FDR in the adaptive procedure, it is a good practice to bound $\hat{\pi}_0$ away from zero, and Storey et al. \cite{St04} proposed an asymptotically equivalent estimator:
\[
\hat{\pi}_0^*(\lambda) = \frac{m-R(\lambda)+1}{(1-\lambda)m}.
\]
Because $\hat{\pi}_0^{*}(\lambda) \geq \hat{\pi}_0^{}(\lambda)$, using $\hat{\pi}_0^{*}(\lambda)$ in $\widehat{\FDR}$ leads to conservative estimation of the FDR.   On the other hand, Storey et al. \cite{St04} showed that the adaptive procedure with $\hat{\pi}_0^{*}(\lambda)$ controls the FDR.   Furthermore, if we use $\pi_0$-estimators that are more conservative than $\hat{\pi}_0^{*}(\lambda)$ in the adaptive procedures, the FDR control can also be guaranteed \citep{LN12}.   Such examples include the two-stage procedure of Benjamini et al. \cite{BKY06} and the one-stage and two-stage procedures of \citet{BR09}.   We will refer to the adaptive procedures that use fixed $\lambda$ parameters as \emph{fixed adaptive} procedures.   In summary, it is well established in the literature that for fixed adaptive procedures, conservative FDR estimation and FDR control are closely related.

In practice, the selection of $\lambda$ amounts to a trade-off between the bias and variance of $\hat{\pi}^*_0(\lambda)$ and should depend on the data at hand.   We will refer to the adaptive procedures that use data to select $\lambda$ as the \emph{dynamic adaptive} procedures.   Interestingly, \citet{LN12} showed that if $\lambda$ is chosen according to a certain stopping time rule, then conservative $\pi_0$ and FDR estimation can still be guaranteed.   Examples include the lowest-slope procedure of \citet{BH00} and the right-boundary procedure of \citet{LN12}. In spite of their conservative estimation, it is unclear whether such procedures will still maintain FDR control. Recently, \citet{HJ16} have proposed a class of weighted Storey $\pi_0$-estimators with data-dependent weights and showed that the corresponding dynamic adaptive procedures control the FDR in finite samples. However, the weight measurability condition required by \citet{HJ16} is not compatible with the stopping time condition required in the lowest-slope and right-boundary procedures, for which a proof of FDR control remains elusive.    

In this paper, we strive to prove the FDR control for a large class of dynamic adaptive procedures, which include the right-boundary and lowest-slope procedures as special cases.  The lowest-slope procedure is historically important in the field of multiple testing and especially in the FDR literature.   The lowest-slope $\pi_0$-estimator was first proposed in \citet{Hochberg90} to control familywise error rate (FWER), and its idea can be traced back to \citet{Schweder82}.   According to \citet{Benjamini10}, Benjamini and Hochberg attempted but could not show that the least-slope procedure controls the FDR and presented the non-adaptive BH procedure in \citet{BH95} as a result.   As the earliest adaptive FDR procedure, the lowest-slope procedure is widely used, but its control of the FDR has not been theoretically established.

We organize the rest of the paper as follows.   In Section~\ref{s2}, we show finite sample FDR control for a very general class of dynamic adaptive procedures and give specific examples of possible $\lambda$ selection rules.   In Section~\ref{s3}, we conduct simulation studies to demonstrate the advantages of dynamic adaptive procedures. Finally, we discuss the issues of identifiability, dependence, and discrete $p$-values, and conclude Section~\ref{s4}.   Technical proofs are postponed until Appendix \ref{app}.

\section{Dynamic adaptive procedures} \label{s2}

Throughout this paper, we will assume that the true null $p$-values are independent and identically distributed as $\text{Unif}(0,1)$ random variables, and are independent of the false null $p$-values. Under this model, arbitrary dependence is allowed among the false null $p$-values. This is the same condition adopted by \citet{BH95}, Storey et al. \cite{St04}, and \citet{LN12}, who call it the {\it null independence model}. Notice that under this model, the number of true nulls $m_0$ is fixed. A more general model with possibly random $m_0$ is termed as the {\it basic independence model} by \citet{HJ15}; note that results in the fixed $m_0$ model can be easily extended to the random case by conditioning on $m_0$ and integrating. We begin by presenting our main theoretical result.

\subsection{FDR control} \label{s21}

In this section, we will show the control of the FDR for the same class of stopping time rules used by \citet{LN12} to establish the conservative FDR estimation. Similar as in \citet{HJ15}, for $0 < \kappa < 1$, we divide the unit interval into a rejection region $[0, \kappa]$ and an estimation region $[\kappa, 1]$. We will first use the $p$-values in the estimation region to determine the tuning parameter $\lambda$ and the corresponding $\hat{\pi}_0^{*} (\lambda)$, then we decide the $p$-value rejection threshold in the rejection region. It may appear restrictive to limit the rejection threshold to be no greater than $\kappa$. In practice, we can set $\kappa$ not too small, say $\kappa=\alpha$, and it will be unlikely that the above restriction will affect the final rejection threshold. We refer readers to Remark 1 of Storey et al. \cite{St04} for a more detailed justification. 

We require the definition of the (forward) $p$-value filtration $\{\mathcal{F}_t\}_{t \in [\kappa, 1)}$, where $\mathcal{F}_t = \sigma( R(s) : \kappa < s \leq t)$. The $\sigma$-algebra $\mathcal{F}_t$ can be thought of as the information given by all the $p$-values located in the interval $(\kappa,t]$ plus the number of $p$-values no larger than $\kappa$.  The $\lambda$ selection rules considered are those such that $\lambda$ is a stopping time with respect to $\{\mathcal{F}_t\}_{t \in [\kappa,1)}$.

We define the following FDR estimator
\begin{equation} \label{fdr-star}
\widehat{\FDR}_{\lambda}^{*}(t) =   \left\{
\begin{array}{ll}
      \frac{m \hat{\pi}_0^{*} (\lambda) t}{R(t) \vee 1} & t \leq \kappa ,\\
      1 & t > \kappa .\\
\end{array} 
\right.
\end{equation}
Furthermore, for any function $F:[0,1] \rightarrow \mathbb{R}$, define the $\alpha$-level thresholding functional by
\[
t_{\alpha}(F) = \text{sup} \{ 0 \leq t \leq 1 : F(t) \leq \alpha \}.
\]
Then $t_{\alpha}(\widehat{\FDR}_{\lambda}^{*})$ is the rejection threshold for the dynamic adaptive procedure based on $\lambda$.

As our main theoretical result, we show that the dynamic adaptive procedures control the FDR.

\begin{thm} \label{t1}
Under the null independence model, suppose $\lambda$ is a stopping time with respect to $\{\mathcal{F}_t\}_{t \in [\kappa,1)}$, and satisfies $0 < \kappa \leq \lambda < 1$ almost surely for a fixed constant $\kappa$. Then 
\begin{equation*}
\FDR\{t_{\alpha}(\widehat{\FDR}_{\lambda}^{*})\} \leq \alpha.    
\end{equation*}
\end{thm}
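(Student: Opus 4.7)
The plan is to reduce FDR control to two separate optional stopping arguments --- a reverse martingale in the rejection region $[0,\kappa]$ and a forward supermartingale in the estimation region $[\kappa,1)$ --- glued together through conditioning on the boundary count $V(\kappa)$. First, writing $\hat t := t_\alpha(\widehat{\FDR}_\lambda^*)$, the defining inequality $\widehat{\FDR}_\lambda^*(\hat t) \le \alpha$ together with the form of $\hat\pi_0^{*}(\lambda)$ gives, on $\{\hat t > 0\}$,
$$\frac{1}{R(\hat t)\vee 1} \;\le\; \frac{\alpha(1-\lambda)}{(m - R(\lambda) + 1)\,\hat t}.$$
Multiplying by $V(\hat t)$, taking expectations, and using $m - R(\lambda) + 1 \ge m_0 - V(\lambda) + 1$ reduces the theorem to showing
$$E\!\left[\frac{V(\hat t)(1-\lambda)}{(m_0 - V(\lambda) + 1)\,\hat t}\right] \;\le\; 1.$$

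To bound this expectation I would condition on the $\sigma$-algebra $\mathcal{H}$ generated by the false null $p$-values together with the positions of all true null $p$-values in $(\kappa,1]$. Under the null independence model, $\mathcal{H}$ determines $V(\kappa)$, $\lambda$ and $\hat\pi_0^{*}(\lambda)$, while conditional on $\mathcal{H}$ the $V(\kappa)$ true null $p$-values in $[0,\kappa]$ are iid Uniform$(0,\kappa)$. Consequently $V(t)/t$ is a reverse martingale on $(0,\kappa]$, and $\hat t$ is a stopping time for the associated reverse filtration because $\hat\pi_0^{*}(\lambda)$ has become a constant. A truncated optional sampling argument applied to $\hat t \vee \varepsilon$ with $\varepsilon \downarrow 0$ then yields $E[V(\hat t)\mathbf{1}\{\hat t > 0\}/\hat t \mid \mathcal{H}] \le V(\kappa)/\kappa$, and since the weight $(1-\lambda)/(m_0 - V(\lambda) + 1)$ is $\mathcal{H}$-measurable, it can be pulled outside to reduce the problem to $E[V(\kappa)(1-\lambda)/((m_0 - V(\lambda) + 1)\kappa)] \le 1$.

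For this I would establish that $L(t) := (1-t)/(m_0 - V(t) + 1)$ is a bounded forward supermartingale with respect to $\{\mathcal{F}_t\}_{t \in [\kappa,1)}$, conditionally on the false null $p$-values. The key computation is that for $s > t \ge \kappa$ the increment $V(s) - V(t)$ is Binomial$(m_0 - V(t),\,(s-t)/(1-t))$; combining this with the identity $\sum_{w=0}^{n} \binom{n}{w} q^w (1-q)^{n-w}/(w+1) = (1 - (1-q)^{n+1})/((n+1)q)$ gives $E[L(s) \mid \mathcal{F}_t] = L(t)\bigl[1 - ((s-t)/(1-t))^{m_0 - V(t) + 1}\bigr] \le L(t)$, with the martingale deficit coming from the $+1$ in the denominator of $L$. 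Because $\lambda < 1$ a.s.\ and $L$ is bounded by $1$, optional stopping yields $E[L(\lambda) \mid V(\kappa),\,\text{false null }p\text{-values}] \le L(\kappa) = (1-\kappa)/(m_0 - V(\kappa) + 1)$, and thus $\FDR \le \alpha\,E[V(\kappa)(1-\kappa)/((m_0 - V(\kappa) + 1)\kappa)]$.

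For the final explicit computation, $V(\kappa) \sim \text{Binomial}(m_0,\kappa)$ together with the elementary identity $k\binom{m_0}{k}/(m_0-k+1) = \binom{m_0}{k-1}$ collapses the remaining expectation to $1 - \kappa^{m_0}$, yielding the slightly sharper bound $\FDR \le \alpha(1-\kappa^{m_0}) \le \alpha$. The main obstacle, I expect, will not be any one calculation in isolation but the careful bookkeeping needed to run the two optional stopping arguments simultaneously: verifying that $\hat t$ is a genuine reverse stopping time after conditioning on $\mathcal{H}$, establishing the supermartingale rather than martingale property of $L$, and justifying the boundary behaviour at $\hat t = 0$ through the truncation $\hat t \vee \varepsilon$.
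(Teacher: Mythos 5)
Your proposal is correct and follows essentially the same route as the paper: reduce to $E[V(\kappa)/(\kappa m \hat{\pi}_0^*(\lambda))] \leq 1$ via the reverse-martingale argument in the rejection region, bound $m - R(\lambda) + 1 \geq m_0 - V(\lambda) + 1$, apply optional stopping to the forward supermartingale $(1-t)/(m_0 - V(t) + 1)$, and finish with the binomial identity giving $1 - \kappa^{m_0}$. The only difference is that you prove inline (correctly) the two ingredients the paper imports by citation, namely Proposition 1 of Heesen and Janssen for the first reduction and the Storey--Taylor--Siegmund binomial argument for the last step.
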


The proof of Theorem~\ref{t1} and its required lemmas are presented in the Appendix~\ref{app}. Briefly, the proof of Theorem~\ref{t1} relies heavily on Lemma~\ref{l1}, whose proof follows that of Proposition 1 of \citet{HJ16}. We then construct a supermartingale (Lemma~\ref{l3}) and invoke the optional stopping theorem to bound the FDR below $\alpha$. 

The stopping time rules required in Theorem~\ref{t1} form a very general class, but it is not clear how they should be constructed in practice. For illustration purpose, we will analyze existing stopping time rules in the literature and show that they can be easily modified to satisfy the conditions of Theorem~\ref{t1}. Through this analysis, we will also draw insight and motivate new rules.

\subsection{Histogram-based $\lambda$ selection rules} \label{s22}

As an example of the stopping time $\lambda$ selection rule, we begin by formally defining the \emph{right-boundary} procedure \citep{LN12}.   For $k \geq 1$, consider a fixed and finite $\lambda$ candidate set $\Lambda=\{\lambda_1, \ldots, \lambda_k\}$ that divides the interval (0, 1] into $k+1$ bins with boundaries at $\lambda_0 \equiv 0 < \lambda_1 < \ldots < \lambda_{k} < \lambda_{k+1} \equiv 1$ such that the $i$th bin is $(\lambda_{i-1}, \lambda_{i}]$ for $i=1,\ldots,k+1$.   This partition resembles the construction of a histogram of the $p$-values.   Then the right-boundary procedure chooses the tuning parameter $\lambda = \lambda_j$, where
\begin{equation}\label{RB-stopping}
j = \min\{1 \le i \le k: \hat{\pi}^*_0(\lambda_i) \ge \hat{\pi}^*_0(\lambda_{i-1})\}
\end{equation}
if this set is non-empty, and otherwise chooses $j=k$.  That is, we choose $\lambda$ as the right boundary of the first bin where the $\pi_0$ estimate at its right boundary is larger or equal to that at its left boundary.   To ensure $\lambda \geq \kappa$, we can simply set $\lambda_1=\kappa$, or we can require that $\lambda_i \geq \kappa$ in addition to the condition $\hat{\pi}^*_0(\lambda_i) \ge \hat{\pi}^*_0(\lambda_{i-1})$ in (\ref{RB-stopping}). Then, it is clear that $\lambda$ is a stopping time with respect to $\{\mathcal{F}_t\}_{t \in [\kappa,1)}$. While in this definition we define a stopping rule using the more conservative estimator $\hat{\pi}_0^*(\lambda)$, $\lambda$ is still a stopping time if we substitute $\hat{\pi}_0(\lambda)$ in (\ref{RB-stopping}), as in the original right boundary procedure of \cite{LN12}. Such a substitution will only affect the $\lambda$ selection rule, and Theorem~\ref{t1} will still give finite sample control of the FDR as long as the thresholding procedure uses the estimator $\widehat{\FDR}_{\lambda}^{*}(t)$ defined in (\ref{fdr-star}).

It is straightforward to show that the right-boundary procedure chooses the first bin whose $p$-value density is less or equal to its tail average.   Typically, the overall $p$-value density shows a decreasing trend, and we want to choose a $\lambda$ not too small (large) to avoid high bias (variance). By design, the right-boundary procedure is likely to stop at a bin when the expected reduction in bias is comparable to the variation of $\hat{\pi}^*_0(\lambda)$.   In summary, the main idea behind the right-boundary procedure is to identify a $\lambda$ that would balance the bias and variance of the corresponding $\pi_0$-estimator.

The smaller the number of $\lambda$ candidates, the less sensitive the right-boundary procedure is to the change in $p$-value density. In the extreme case, if $k=1$, then the right-boundary procedure reduces to choosing a fixed $\lambda=\lambda_1$.   On the other hand, if we set $k$ too large, then we risk stopping too early and choosing a small $\lambda$ and its associated high positive bias in $\pi_0$ estimation.   This is because at each $\lambda$ candidate, there is a positive probability the procedure could stop, and checking the stopping condition too frequently will likely lead to early stop.   Past simulation studies \cite{LN12, Nettleton06} suggest that an equal-distance 20-bin setup is a reasonable choice for the number of tests $m$ in the thousands.

\subsection{Quantile-based $\lambda$ selection rules} \label{s23}

The histogram-based rules require the explicit specification of the $\lambda$ candidates beforehand.   Alternatively, we can let the data to determine the candidates by choosing $\lambda$ among $p$-value quantiles.   For example, Benjamini et al.\ \cite{BKY06} proposed the $k$-quantile procedure which selects $\lambda = p_{(k)}$ for some prespecified $1 \leq k \leq m$.   They recommended that $k=\lfloor\frac{m}{2}\rfloor$ such that $\lambda$ is approximately the median of the $p$-values.

\citet{BH00} proposed the lowest-slope procedure to control the FDR, which chooses the tuning parameter $\lambda = p_{(j)}$, where
\begin{equation}\label{LSL-stopping}
j = \min\{2 \le i \le m:  \hat{\pi}_0^*(p_{(i)}) > \hat{\pi}_0^*(p_{(i-1)})\}.
\end{equation}
Comparing to (\ref{RB-stopping}), it is easy to see that the right-boundary and lowest-slope procedures are essentially identical except that they use different $\lambda$ candidate sets.   Similar as our minor modification to the right-boundary procedure so that $\lambda \geq \kappa$, we can require that $p_{(i)} \geq \kappa$ in addition to the condition $\hat{\pi}_0^*(p_{(i)}) > \hat{\pi}_0^*(p_{(i-1)})$ in (\ref{LSL-stopping}).

Because the lowest-slope procedure checks its stopping condition at every realized $p$-value, it tends to stop too early and suffer high positive bias in estimating $\pi_0$. Not surprisingly, simulation studies in the literature have shown that the lowest-slope procedure is one of the most conservative and least powerful adaptive procedures, for example, see \cite{LN12} and \cite{Nettleton06}, among many others. In a sense, the lowest-slope procedure is penalized by the same multiplicity it tries to address. This undesirable result can be easily remedied by considering fewer stopping points, similar in spirit to choosing a reasonable number of bins for the right-boundary procedure.   We analogously define a {\em right-boundary quantile\/} procedure which applies the original right-boundary procedure to an arbitrary grid of fixed quantiles of the $p$-value distribution.    In the simulations to follow we will show that the right-boundary and right-boundary quantile procedures provide the best performance among dynamic adaptive procedures known to control the FDR in finite samples.

\section{Simulations} \label{s3}

We conducted simulation studies to evaluate the FDR control, power and $m_0$ estimation properties of the dynamic adaptive procedures in the literature.   The candidate procedures are

\begin{itemize}[label={--}]
\item \texttt{BH}, the original step-up procedure of \citet{BH95};
\item \texttt{ORC}, the oracle procedure by applying the BH procedure at level $\alpha /\pi_0$, assuming known $\pi_0$;
\item \texttt{RB20}, the right-boundary procedure with $\Lambda = \{0.05,0.1,...,0.95\}$;
\item \texttt{LSL}, the modified lowest-slope procedure;
\item \texttt{RB20q}, the right-boundary procedure that considers the 20 evenly spaced $p$-value quantiles. More specifically, $\Lambda = \{q_{0.05},q_{0.1},...,q_{0.95}\}$ where $q_\gamma$ denote the $\gamma$ quantile of the $p$-values;
\item \texttt{HJW}, the weight shifting method of \citet{HJ16}.
\end{itemize}
The simulation settings are similar to those in \citet{LN12}.   When true null $p$-values are independent, all the procedures considered are established to control the FDR in finite samples at level $\alpha$. \texttt{BH} controls the FDR conservatively at level $\pi_0 \alpha$. The finite sample control of \texttt{RB20}, \texttt{LSL} and \texttt{RB20q} all follow from Theorem~\ref{t1}. \texttt{HJW} is a particular example from a class of dynamic adaptive procedures shown to control the FDR\citep{HJ16}. We implement \texttt{HJW} as described in Section 5 of \cite{HJ16}, with fixed tuning parameters $\epsilon=0.05$, $k=6$, and $(\lambda_i)_{i=0}^k = (0.5,0.6,0.7,0.8,0.9,0.95)$.

Simulations are based on $J=10000$ replications, and the nominal FDR level is $\alpha=0.05$. For each replication, $m=10000$ one-sided tests of $H_0: \mu =0$ are performed, with standard normal true null statistics, and false null statistics having $N(\mu,1)$ distribution. Effect sizes $\mu$ are set to $0.5, 1, 2$ and 4. For effect sizes larger than 4, the false null $p$-values are well separated from the true null $p$-values, and all procedures achieve full power relative to \texttt{ORC}.

\subsection{Independent tests}

Simulation results for independent test statistics are reported in Figure~\ref{fig1}. The first row plots average realized FDR, the second the power relative to \texttt{ORC}, and the third the log mean-squared error (MSE) of $\hat{m}_{0,j} = \hat{\pi}_{0,j} m$, where $\{\hat{\pi}_{0,j}\}_{j=1}^J$ are the $\pi_0$ estimates from each of the $J$ replications, and MSE is defined as
\[
\text{MSE} = \frac{1}{J} \sum_{j=1}^J ( \hat{m}_{0,j} - m_0)^2.
\]
All procedures control the FDR below the nominal level 0.05 and see an increase in the FDR and relative power as the signal strength $\mu$ increases. \texttt{RB20} and \texttt{RB20q} provide the greatest relative power in all settings, and this is because they have the smallest MSE of $\hat{m}_0$. When the signal strength is larger, and the optimal $\lambda$ may be smaller than $\lambda_1 = 0.05$, the minimal possible value from \texttt{RB20}, in which case the quantile-based bins of \texttt{RB20q} can provide a marginal improvement over \texttt{RB20} by considering smaller stopping points, similar to the RB20* procedure in \citet{LN12}. \texttt{HJW}, although similar in spirit to \texttt{RB20}, cannot achieve the same power performance since it restricts its estimation region to $[0.5,1]$, and its right-to-left measurability condition forces it to sometimes over-weight the influence of smaller $p$-values in the estimation of $\pi_0$. Since it is known that \texttt{ORC} controls the FDR at exactly level $\alpha$ \cite{BY01}, all average realized FDR levels are corrected by the difference between the FDR of \texttt{ORC} and the target FDR level $\alpha$.

\begin{figure}[h!]
\includegraphics[width=\textwidth]{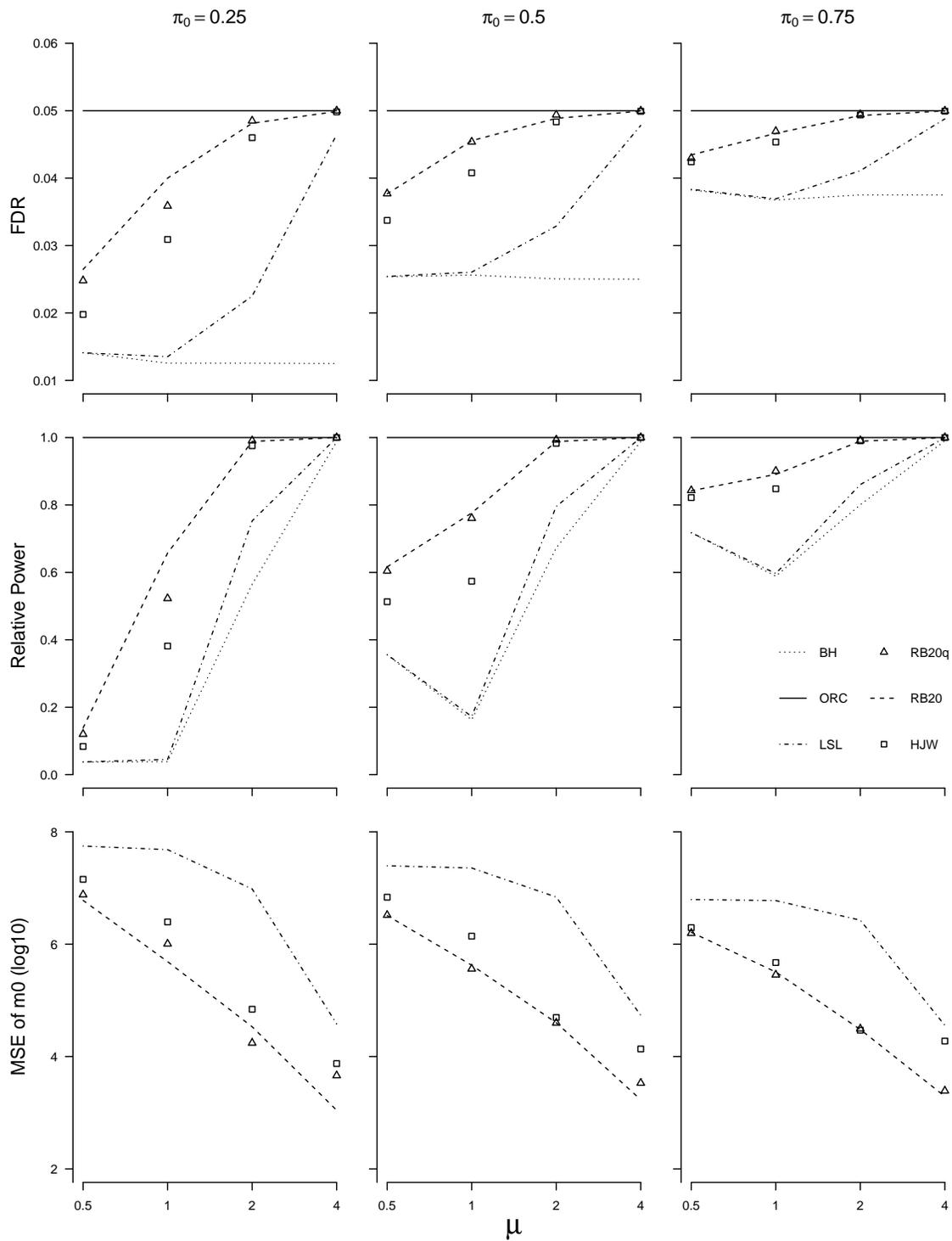}
\caption{Simulation results for independent test statistics.}
\label{fig1}
\end{figure}

\subsection{Dependent tests}

We also performed a simulation study with dependent test statistics. In particular, statistics have block auto-regressive order 1 correlation structure with block size 50 and correlation $\rho^{|i-j|}$  between the $i$th and $j$th elements in any block, and correlation coefficient $\rho = -0.9$. Block structure such as this has been used by \citet{LN12}, among others, to recreate the varying positive and negative correlations expected among genes in the same biological pathway. Results are reported in Figure~\ref{fig2}. As above, all procedures control the FDR below the nominal level 0.05 and increase in the FDR and relative power as the signal strength increases. \texttt{RB20} and \texttt{RB20q} remain the best in terms of power. There is some evidence that all procedures, including \texttt{ORC}, become conservative in the weak signal case, due to the dependence among the test statistics.

\begin{figure}[h!]
\includegraphics[width=\textwidth]{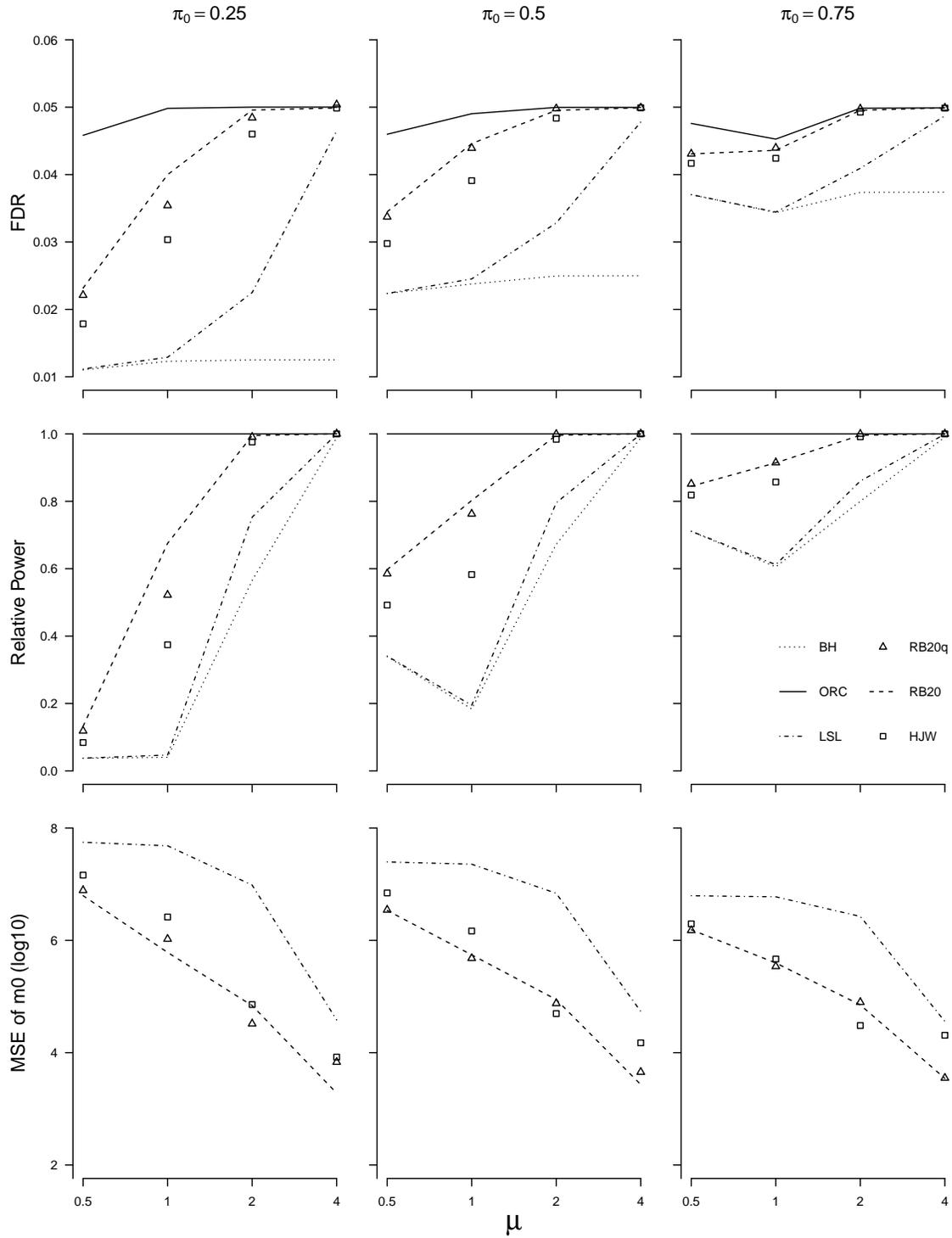}
\caption{Simulation results for correlated test statistics, $\rho=-0.9$.}
\label{fig2}
\end{figure}

\citet{LN12} show that, under weak dependence conditions, the dynamic adaptive procedures working with fixed grids (e.g., the right-boundary procedure) provide simultaneously conservative estimation and control of the FDR asymptotically.   For details, see Theorems 5 and 6 of \citet{LN12}.   Such theory explains why the FDR is under control in our dependent simulation setting. If we limit block sizes to be a constant or below a certain threshold and let the number of tests ($m$) increases, the weak dependence conditions are likely to hold.   This is because although tests within the same block are correlated, we will have more and more independent blocks as $m$ increases.

\section{Discussion and conclusions} \label{s4}

\subsection{Identifiability}

All of the results proven in this paper give only conservative control and estimation, rather than exact control or estimation. \citet{BY01}, among others, have shown that the original BH procedure has the FDR exactly equal to $\pi_0 \alpha$, but in the adaptive case in which we incorporate an estimate of $\pi_0$, identifiability issues manifest themselves, as discussed in Section 3.1 of \citet{GW04}. 

Under the commonly used {\em two-group model\/} \citep{Ef01} where all $p$-values are independent, with random $M_0 \sim BIN(m,\pi_0)$, and when $\lambda$ is selected using a stopping time rule from a fixed candidate set $\Lambda$, it can be shown that
\[
\FDR(t_{\alpha}(\widehat{\FDR}_{\lambda}^{*})) \leq \alpha \cdot \sup_{\lambda \in \Lambda} P(i \text{th null hypothesis is true } | ~ p_i > \lambda).
\]
There may in fact be no $\lambda$ for which $P(i \text{th null hypothesis is true } | ~ p_i > \lambda)=1$, a result of the false null $p$-value distribution having a non-zero uniform component. This is termed {\em impurity\/} by \citet{GW04}. Such purity issues are the reason that we cannot, without further assumptions on $F_1$, find an unbiased Storey-type estimator for $\pi_0$, and can only conclude conservatism.

\subsection{Dependence}

The results of this paper are proven under the classical null independence model, but prior FDR control literature has considered estimation and control properties under dependence assumptions on the true null $p$-values, in particular, the {\em positive regression dependence on a subset\/} (PRDS) condition in \citet{BY01} and the {\em reverse martingale dependence\/} (RMD) condition in \citet{HJ15}. 

Proposition 6.2 of \citet{HJ15} implies that finite sample control will not hold under every PRDS or RMD model, even for fixed adaptive procedures like those described by Storey et al.\ \cite{St04}. Nonetheless, it may still be possible to further limit the class of models or describe an alternative dependence model such that finite sample control can be proven for adaptive or even dynamic adaptive procedures. In particular, simulation studies in this paper and \citet{LN12} motivate that finite sample control may hold under certain types of block or autoregressive dependence. 

\subsection{Discrete $p$-values}

As most papers in the FDR literature, we have assumed that the true null $p$-values follow Unif(0,1). In many practical applications, discrete $p$-values are observed, and we will discuss the homogeneous and heterogeneous discrete cases separately. 

In many applications, the discrete $p$-values have a set of common support points, and we call such setting as the homogeneous discrete $p$-values setting.
For example, in high-throughput genetic experiments, $p$-values obtained through permutation tests have a set of identical support points.
The common support provides natural grid candidates, and \citet{Liang16} proposed the discrete right-boundary procedure that applies the idea of the right-boundary procedure to this setting and showed its conservative $\pi_0$ and FDR estimation.

For heterogeneous discrete $p$-values setting, which is also common in practice, many methods have been developed, but few have been shown to be powerful and control the FDR in finite samples. Recently, \citet{Dohler18} proved that several new procedures control the FDR and demonstrated their power in simulation studies. Adaptive procedures that incorporate $\pi_0$ estimates are suggested, but their control of the FDR has not been established and will be interesting future work.

\subsection{Conclusions}

For the adaptive procedures with $\hat{\pi}^*_0(\lambda)$ estimator, we show that if $\lambda$ is a forward stopping time, then the FDR is controlled. We demonstrated through simulation that the right-boundary procedure (\texttt{RB20}) and quantile-based right-boundary procedure (\texttt{RB20q}) outperform the competing dynamic adaptive procedures in terms of power and estimation accuracy of $\pi_0$ while maintaining FDR control at the nominal level.   In similar simulation settings in \citet{LN12}, the right-boundary procedure, with estimator $\hat{\pi}_0(\lambda)$ and a different candidate set $\Lambda$ than \texttt{RB20}, was shown to be more powerful than many other adaptive procedures, such as $\lambda=0.5$, the median adaptive procedure and the two-stage procedures of Benjamini et al.\ \cite{BKY06}, and the two-stage procedure of \citet{BR09}.   The simulation results thus far show that the right-boundary procedure is one of the most powerful adaptive procedures that controls the FDR.

Our results strengthen the connection between the FDR estimation approach and the FDR control approach.   With a conservative FDR estimator, we can use the step-up procedure to find the largest $p$-value whose FDR estimate is below the target FDR level and controls the FDR as a result. This connection is the most evident for fixed adaptive procedures through the work of Storey et al. \cite{St04} and \citet{LN12}. It is further studied for certain dynamic adaptive procedures by \citet{HJ15}. We extend this connection to still more dynamic adaptive procedures in this paper.   The FDR estimation approach is more direct, and conservative FDR estimation much easier to establish than finite sample control of the FDR.   Such insight could be useful in the future to design and evaluate new FDR controlling procedures.

\appendix

\section{Proofs} \label{app}

\subsection{Lemmas for Theorem 1}

We require the following three lemmas.

\begin{lem} \label{l1}

Under the conditions of Theorem 1, the dynamic adaptive procedure with the stopping time tuning parameter $1 > \lambda \geq \kappa > 0$ has
\begin{equation*}\label{Lemma1}
    \FDR\{t_{\alpha}(\widehat{\FDR}_{\lambda}^{*})\} \leq \frac{\alpha}{\kappa} E \bigg[ \frac{V(\kappa)}{m \cdot \hat{\pi}_0^*(\lambda)} \bigg].
\end{equation*}
\end{lem}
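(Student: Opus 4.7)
The plan is to reduce the bound to a single reverse-martingale argument on the rejection region $[0,\kappa]$, exploiting the separation of the two intervals through the measurability of $\hat{\pi}_0^{*}(\lambda)$. First, I would invoke the defining property of the step-up threshold: on $\{t_\alpha > 0\}$ the inequality $\widehat{\FDR}_{\lambda}^{*}(t_\alpha) \leq \alpha$ rearranges to $(R(t_\alpha)\vee 1)^{-1} \leq \alpha/(m\,\hat{\pi}_0^{*}(\lambda)\,t_\alpha)$, while on $\{t_\alpha=0\}$ the integrand $V(t_\alpha)/(R(t_\alpha)\vee 1)$ vanishes since $V(0)=0$. With the convention $0/0:=0$ this yields
\[
\FDR\{t_{\alpha}(\widehat{\FDR}_{\lambda}^{*})\} \;\leq\; \alpha\cdot E\bigg[\frac{V(t_\alpha)}{m\,\hat{\pi}_0^{*}(\lambda)\, t_\alpha}\bigg],
\]
so the task reduces to showing $E[V(t_\alpha)/(t_\alpha\,\hat{\pi}_0^{*}(\lambda))] \leq (1/\kappa)\,E[V(\kappa)/\hat{\pi}_0^{*}(\lambda)]$.

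For this I would introduce the reverse filtration $\mathcal{A}_t = \sigma\bigl(p_i\mathbf{1}[p_i>t],\ \mathbf{1}[p_i\leq t] : i=1,\dots,m\bigr)$, which grows as $t$ decreases and records the exact positions of every $p$-value above $t$. Conditional on $\mathcal{A}_t$, the $V(t)$ true null $p$-values below $t$ are i.i.d.\ $\text{Unif}(0,t)$, so for $s\leq t$ one has $V(s)\mid\mathcal{A}_t \sim \text{Binomial}(V(t),s/t)$ and hence $\{V(t)/t\}_{t\in(0,1]}$ is a reverse martingale with respect to $\{\mathcal{A}_t\}$. The paper's forward filtration satisfies $\mathcal{F}_\lambda \subseteq \mathcal{A}_\kappa$, so $\lambda$ and $\hat{\pi}_0^{*}(\lambda)$ are $\mathcal{A}_\kappa$-measurable and the random constant $c = m\hat{\pi}_0^{*}(\lambda)/\alpha$ is known under the conditioning. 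Then $t_\alpha = \sup\{0\leq t \leq \kappa : ct \leq R(t)\vee 1\}$ is a reverse stopping time bounded above by $\kappa$, and optional sampling would give $E[V(t_\alpha)/t_\alpha \mid \mathcal{A}_\kappa] \leq V(\kappa)/\kappa$. Multiplying by the $\mathcal{A}_\kappa$-measurable factor $1/\hat{\pi}_0^{*}(\lambda)$ and taking outer expectation then combines with the first step to produce the stated bound.

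The main obstacle is the singularity of $V(t)/t$ as $t\downarrow 0$: because $\{t_\alpha = 0\}$ can carry positive probability, the optional sampling theorem cannot be applied directly on $[0,\kappa]$. Following the strategy of Proposition~1 of \citet{HJ16}, I would apply optional sampling to the truncated stopping time $t_\alpha \vee \epsilon$ on $[\epsilon,\kappa]$, where $V(t)/t \leq V(\kappa)/\epsilon$ is integrable, to obtain the equality $E[V(t_\alpha\vee \epsilon)/(t_\alpha\vee \epsilon) \mid \mathcal{A}_\kappa] = V(\kappa)/\kappa$ for each $\epsilon > 0$, and then send $\epsilon\downarrow 0$ and invoke Fatou's lemma. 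Since $V(\epsilon)=0$ for every $\epsilon$ smaller than the smallest true null $p$-value, the liminf of the integrand on $\{t_\alpha = 0\}$ equals $0$, consistent with the convention $0/0:=0$, so Fatou delivers exactly the conditional inequality $E[V(t_\alpha)/t_\alpha \mid \mathcal{A}_\kappa] \leq V(\kappa)/\kappa$ required above.
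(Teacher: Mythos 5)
Your argument is correct and is essentially the route the paper itself takes: the paper proves Lemma~\ref{l1} simply by deferring to Proposition~1 of \citet{HJ16}, and your write-up is a faithful reconstruction of that proof --- the step-up self-consistency bound, the reverse martingale $V(t)/t$ with $\hat{\pi}_0^{*}(\lambda)$ measurable at time $\kappa$, optional sampling on $[\epsilon,\kappa]$, and the $\epsilon\downarrow 0$ limit. No gaps worth flagging; you have simply written out in full what the paper cites.
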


Lemma~\ref{l1} follows immediately from Proposition 1 of \citet{HJ16}. Although their Proposition 1 is established under the basic independence model, their proof still works under our null independence model.

Alternatively, Lemma~\ref{l1} can be viewed as a special case of Lemma 6.1 in \citet{HJ15} by noting that our stopping time condition satisfies their condition (A1) and the null independence model is a special case of their reverse martingale model.\\



\begin{lem} \label{l2}

Suppose $X \sim \text{BIN}(n,p)$. Then
\begin{equation*}
E \bigg[ \frac{1}{n-X+1} \bigg] \leq \frac{1}{(n+1)(1-p)}.    
\end{equation*}
\end{lem}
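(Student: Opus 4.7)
The plan is to expand the expectation directly as a finite sum over the binomial probability mass function, manipulate the factor $1/(n-k+1)$ into something that combines with $\binom{n}{k}$ to produce a binomial coefficient of order $n+1$, and then identify the resulting sum as (almost) a complete binomial expansion. The whole argument is essentially a one-line generating function identity once one spots the right substitution.

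First I would write
\begin{equation*}
E\!\left[\frac{1}{n-X+1}\right] = \sum_{k=0}^{n} \frac{1}{n-k+1}\binom{n}{k} p^{k}(1-p)^{n-k},
\end{equation*}
and change the index to $j = n-k$, which turns this into
\begin{equation*}
\sum_{j=0}^{n} \frac{1}{j+1}\binom{n}{j} p^{n-j}(1-p)^{j}.
\end{equation*}
Next I would apply the classical identity $\frac{1}{j+1}\binom{n}{j} = \frac{1}{n+1}\binom{n+1}{j+1}$, which allows the $j$-dependent factor to be absorbed into a Pascal-shifted binomial coefficient. Re-indexing by $i = j+1$ and pulling out a factor of $1/(1-p)$ to balance the exponent of $(1-p)$ against the new binomial coefficient, I would obtain
\begin{equation*}
E\!\left[\frac{1}{n-X+1}\right] = \frac{1}{(n+1)(1-p)}\sum_{i=1}^{n+1} \binom{n+1}{i} p^{n+1-i}(1-p)^{i}.
\end{equation*}

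The sum on the right is just the binomial expansion of $(p+(1-p))^{n+1} = 1$ with the $i=0$ term removed, so it equals $1 - p^{n+1} \leq 1$, which yields the claimed inequality (and in fact an explicit equality for $E[1/(n-X+1)]$). There is no real obstacle here; the only step requiring a moment of thought is recognizing the identity $\frac{1}{j+1}\binom{n}{j} = \frac{1}{n+1}\binom{n+1}{j+1}$, after which the bound is immediate.
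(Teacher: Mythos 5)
Your computation is correct and in fact yields the exact identity $E[1/(n-X+1)] = (1-p^{n+1})/\{(n+1)(1-p)\}$ (for $p<1$), from which the stated bound is immediate. The paper does not prove this lemma itself but cites it as Lemma 1 of Benjamini, Krieger, and Yekutieli (2006), and your argument is essentially the standard proof given there, so there is nothing to add.
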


Lemma~\ref{l2} is given as Lemma 1 in \cite{BKY06}. 

\vspace{3mm}

Similar to the definition of the forward $p$-value filtration in Section~\ref{s21}, define the forward true null filtration $\{\mathcal{G}_t\}_{t \in [0,1]}$, where
\begin{equation*}
\mathcal{G}_t = \sigma(V(s) : 0 \leq s \leq t),    
\end{equation*}
and the $\sigma$-algebra generated by the false null $p$-values,
\begin{equation*}
\mathcal{S} = \sigma(S(t) : 0 \leq t \leq 1).
\end{equation*}

\begin{lem} \label{l3}

Define the filtration $\{\mathcal{H}_t\}_{t \in [0,1]}$ by
\begin{equation*}
\mathcal{H}_t = \sigma(\mathcal{G}_t, \mathcal{S}).
\end{equation*}
Then the process
\begin{equation*}
M(t) = \bigg\{ \frac{1-t}{m_0 - V(t) + 1} \bigg\}_{t \in [0,1]}
\end{equation*}
is a supermartingale with respect to $\{\mathcal{H}_t\}_{t \in [0,1]}$.
\end{lem}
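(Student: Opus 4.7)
The plan is to verify the supermartingale inequality $E[M(s)\mid\mathcal{H}_t]\le M(t)$ directly for arbitrary $0\le t\le s\le 1$, reducing to a binomial expectation to which Lemma~\ref{l2} applies.

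First, I would pin down the conditional distribution of $V(s)-V(t)$ given $\mathcal{H}_t$. Under the null independence model, the $m_0$ true null $p$-values are i.i.d.\ $\mathrm{Unif}(0,1)$ and independent of everything in $\mathcal{S}$, so conditioning on $\mathcal{S}$ adds no information about the true nulls. Given $\mathcal{G}_t$, and in particular given $V(t)$, the $m_0-V(t)$ true null $p$-values exceeding $t$ are i.i.d.\ $\mathrm{Unif}(t,1)$ (this is the standard ``conditional on being $>t$'' property for uniform order statistics; the $V(t)$ p-values already in $[0,t]$ are accounted for in $\mathcal{G}_t$). Consequently, writing $n=m_0-V(t)$ and $p=(s-t)/(1-t)$, the increment $V(s)-V(t)\mid\mathcal{H}_t$ is $\mathrm{Bin}(n,p)$, so $m_0-V(s)=n-X$ with $X\sim\mathrm{Bin}(n,p)$.

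Next I would compute:
\begin{equation*}
E[M(s)\mid\mathcal{H}_t] \;=\; (1-s)\,E\!\left[\frac{1}{n-X+1}\,\bigg|\,\mathcal{H}_t\right].
\end{equation*}
Applying Lemma~\ref{l2} to $X\sim\mathrm{Bin}(n,p)$ gives $E[(n-X+1)^{-1}]\le \frac{1}{(n+1)(1-p)}=\frac{1-t}{(n+1)(1-s)}$, so
\begin{equation*}
E[M(s)\mid\mathcal{H}_t] \;\le\; (1-s)\cdot\frac{1-t}{(n+1)(1-s)} \;=\; \frac{1-t}{m_0-V(t)+1} \;=\; M(t),
\end{equation*}
which is the desired supermartingale inequality. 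Integrability is immediate since $0\le M(t)\le 1$ almost surely, and adaptedness of $M(t)$ to $\mathcal{H}_t$ is clear because $V(t)$ is $\mathcal{G}_t$-measurable. Edge cases $s=1$ or $t=0$ follow by the same computation or by continuity in the appropriate sense.

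I do not anticipate a serious obstacle; the one step requiring care is the conditional-distribution claim, specifically that conditioning on the full path $\{V(u):u\le t\}$ (rather than just on $V(t)$) leaves the remaining true null $p$-values i.i.d.\ $\mathrm{Unif}(t,1)$, and that throwing in $\mathcal{S}$ changes nothing by independence of true and false nulls. Once this is in hand, Lemma~\ref{l2} provides exactly the inequality required, and the proof closes in a single line.
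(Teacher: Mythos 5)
Your proposal is correct and follows essentially the same route as the paper: reduce the conditional expectation of $M$ at the later time to a binomial expectation via the Markov/independence structure of the true null $p$-values, then apply Lemma~\ref{l2} to obtain the supermartingale inequality. The only cosmetic difference is that the paper treats the endpoint $t=1$ as a separate trivial case ($M(1)=0$) rather than folding it into the main computation, which you should also do explicitly since the Lemma~\ref{l2} bound degenerates when $p=1$.
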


\begin{proof}

Note that $0 \leq M(t) \leq 1$ a.s. for all $t \in [0,1]$, so certainly $M(t)$ is a collection of integrable random variables. Now fix $0 \leq s \leq t$. If $t=1$, then
\begin{equation*}
E \bigg[ M(t) \biggbar \mathcal{H}_s \bigg] = 0 \leq M(s).
\end{equation*}
Otherwise, $s \leq t < 1$, and
\begin{eqnarray*}
E \bigg[ M(t) \biggbar \mathcal{H}_s \bigg] &=& E \bigg[ M(t) \biggbar V(s) \bigg] \\
&=& E \bigg[ \frac{1-t}{m_0 - V(t) + 1} \biggbar V(s) \bigg] \\ 
&=& E\bigg[\frac{1-t}{(m_0-V(s)) - (V(t)-V(s)) + 1} \biggbar V(s) \bigg], \\ 
\end{eqnarray*}
where the first equality follows since $\mathcal{S}$ is independent of $M(t)$ by null independence. Also by null independence, conditional on $V(s)$,
\begin{equation*}
    V(t) - V(s) \sim \text{BIN}\bigg(m_0 - V(s), \frac{t-s}{1-s}\bigg).
\end{equation*}
Thus by Lemma~\ref{l2},
\begin{eqnarray*}
    E \bigg[ M(t) \biggbar \mathcal{H}_s \bigg] &\leq& \frac{1-t}{(m_0 - V(s) + 1)(1 - \frac{t-s}{1-s})} \\
    &=& \frac{1-s}{m_0 - V(s) + 1} \\
    &=& M(s).
\end{eqnarray*}

\end{proof}

\subsection{Proof of Theorem 1}

\begin{proof}

By Lemma~\ref{l1}, it follows that
\begin{equation*}
    \FDR\{t_{\alpha}(\widehat{\FDR}_{\lambda}^{*})\} \leq \alpha E \bigg[ \frac{V(\kappa)}{\kappa m \hat{\pi}_0^*(\lambda)} \bigg],
\end{equation*}
and hence it suffices to show that

\begin{equation*}
    E \bigg[ \frac{V(\kappa)}{\kappa m \hat{\pi}_0^*(\lambda)} \bigg] \leq 1.
\end{equation*}


Since $m - R(\lambda) \geq m_0 - V(\lambda)$,
\begin{eqnarray*}
    E \bigg[ \frac{V(\kappa)}{\kappa m \hat{\pi}_0^*(\lambda)} \bigg] &=& E \bigg[ \frac{1-\lambda}{m - R(\lambda) + 1} \cdot \frac{V(\kappa)}{\kappa} \bigg] \\
    &\leq& E \bigg[ \frac{1-\lambda}{m_0 - V(\lambda) + 1} \cdot \frac{V(\kappa)}{\kappa} \bigg]. \\
    &=& E\bigg\{ \frac{V(\kappa)}{\kappa} \cdot E \bigg[ \frac{1-\lambda}{m_0 - V(\lambda) + 1} \bigg| \mathcal{H}_{\kappa} \bigg] \bigg\} \\
    &\leq& E \bigg[ \frac{1-\kappa}{m_0 - V(\kappa) + 1} \cdot \frac{V(\kappa)}{\kappa} \bigg] \\
    &\leq& 1-\kappa^{m_0}\\
    &\leq& 1.
\end{eqnarray*}
The third to last inequality follows from Lemma~\ref{l3} and the optional stopping theorem \citep{Do53}. Note that $\lambda$ is a stopping time with respect to $\{\mathcal{F}_t\}_{t \in [\kappa,1)}$, and thus is also a stopping time with respect to the larger filtration $\{\mathcal{H}_t\}_{t \in [0,1]}$ to which the supermartingale in Lemma~\ref{l3} is adapted. The second to last inequality follows from the binomial argument of Storey et al. \cite{St04}, Theorem 3 (since $\kappa$ is a fixed constant).
\end{proof}

\bibliography{mybib}

\end{document}